\newcommand{\prob}{\ensuremath{\mathcal{P}}\xspace}
\newcommand{\D}{\ensuremath{\mathit{D}}\xspace}
\renewcommand{\P}{\ensuremath{\mathit{P}}\xspace}
\newcommand{\C}{\ensuremath{\mathit C}\xspace}
\renewcommand{\S}{\ensuremath{\mathit S}\xspace}
\newcommand{\I}{\ensuremath{\mathcal{I}\xspace}}
\newcommand{\Dplus}{\ensuremath{\mathit{D}^+}\xspace}
\newcommand{\Pplus}{\ensuremath{\mathit{P}^+}\xspace}
\newcommand{\nomatch}{\ensuremath{\mathit{nil}}\xspace}
\newcommand{\rol}[1]{\ensuremath{\mathit{rol}_{#1}}\xspace}
\newcommand{\rols}{\ensuremath{\mathit{rols}}\xspace}
\newcommand{\pgt}[1]{\succ_{#1}}
\newcommand{\pgeq}[1]{\succeq_{#1}}
\newcommand{\pleq}[1]{\preceq_{#1}}
\newcommand{\mgt}{\succ_{\cal R}}
\newcommand{\mgeq}{\succeq_{\cal R}}
\newcommand{\mlt}{\prec_{\cal R}}
\newcommand{\mleq}{\preceq_{\cal R}}
\newcommand{\Ropt}{\mathcal{R}_{\mathit{opt}}}
\newcommand{\RPopt}{\mathcal{RP}_{\mathit{opt}}}
\newcommand{\quota}[1]{\ensuremath{\mathit{cap_{#1}}}}
\newcommand{\match}{\ensuremath{\mu}}
\newcommand{\matchinv}{\ensuremath{\mu^{-1}}}
\newcommand{\matchDvar}[2]{\ensuremath{m_{#1}[#2]}}
\newcommand{\matchCvar}[2]{\ensuremath{m_{#1}[#2]}}
\newcommand{\matchPvar}[3]{\ensuremath{m_{#1}[#2, #3]}}
\newcommand{\willaccept}{\ensuremath{\mathit{willAccept}}}
\newcommand{\choice}[1]{\ensuremath{\mathit{Ch}_{#1}}}
\newcommand{\ranked}[1]{\ensuremath{\mathit{ranked(#1)}}}
\newcommand{\rank}[1]{\ensuremath{\mathit{rank}_{#1}}}
\newcommand{\true}{\ensuremath{\mathit{true}}}
\newcommand{\smp}{\mbox{SMP}\xspace}
\newcommand{\smpc}{\mbox{SMP-C}\xspace}
\newcommand{\scite}{\shortcite}
\newtheorem{thm}{Theorem}{\bfseries}{\rmfamily}
{\bfseries}{\rmfamily}
{\bfseries}{\rmfamily}
{\bfseries}{\rmfamily}
{\bfseries}{\rmfamily}
{\bfseries}{\rmfamily}
{\bfseries}{\rmfamily}
\newtheorem{defn}{Definition}{\bfseries}{\rmfamily}
\title{Exploring Strategy-Proofness, Uniqueness, and Pareto Optimality for the Stable Matching Problem with Couples}
\date{\today}
\author{Andrew Perrault \and Joanna Drummond \and Fahiem Bacchus\\
Department of Computer Science\\
University of Toronto\\
\{perrault,jdrummond,fbacchus\}@cs.toronto.edu
}
\begin{document}

\maketitle

\begin{abstract}
  The Stable Matching Problem with Couples (\smpc) is a ubiquitous
  real-world extension of the stable matching problem (\smp) involving
  complementarities. Although \smp can be solved in
  polynomial time, \smpc is NP-Complete.
%However, it is not clear if many of the theoretical results 
  Hence, it is not clear which, if any, of the theoretical results
  surrounding the canonical \smp
  problem apply in this setting. In this paper, we
  use a recently-developed SAT encoding to solve
  \smpc exactly. This allows us to enumerate all
  stable matchings for any given instance of \smpc. With this tool, we
  empirically evaluate some of the properties that have been
  hypothesized to hold for \smpc.

  We take particular interest in investigating if, as the size of the
  market grows, the percentage of instances with unique stable
  matchings also grows. While we did not find this trend among 
  the random problem instances we sampled, we did find that the
  percentage of instances with an resident optimal matching 
  seems to more closely follow the trends
  predicted by previous conjectures. We also define and investigate
  resident Pareto optimal stable matchings, finding that, even though
  this is important desideratum for the deferred acceptance style
  algorithms previously designed to solve \smpc, they do not always
  find one.

  We also
  investigate strategy-proofness for \smpc, showing that even if only
  one stable matching exists, residents still have incentive to
  misreport their preferences. However, if a problem has a
  resident optimal stable matching, we show that residents
  cannot manipulate via truncation.
\end{abstract}

%\terms{Strategy-Proofness, Stable Matching, Complementarities, Pareto Optimality}

%\keywords{Matching, Computational Game Theory} 

%\acmformat{Joanna Drummond, Andrew Perrault, and Fahiem Bacchus,
  %2015. Exploring Strategy-Proofness, Uniqueness, and Pareto Optimality for the Stable Matching Problem with Couples.}
% At a minimum you need to supply the author names, year and a title.
% IMPORTANT:
% Full first names whenever they are known, surname last, followed by a period.
% In the case of two authors, 'and' is placed between them.
% In the case of three or more authors, the serial comma is used, that is, all author names
% except the last one but including the penultimate author's name are followed by a comma,
% and then 'and' is placed before the final author's name.
% If only first and middle initials are known, then each initial
% is followed by a period and they are separated by a space.
% The remaining information (journal title, volume, article number, date, etc.) is 'auto-generated'.

%\maketitle

%\input{dpb_ec_intro.tex}
\section{Introduction}
Stable matching problems are ubiquitous, with many real-world
applications, ranging from dating markets, to labour markets, to the
school choice problem. Arguably the most well-known of these
applications is the residency matching problem
\cite{niederle_roth_sonmez:2008,roth:JPE1984,somnez-roth:bostonAER2005},
instantiated in the large National Resident Matching Program (NRMP),
among others. Importantly, these markets have adapted to the needs of
their participants over the years, allowing them to more richly
express their preferences, or guarantee certain properties of the
match valuable to participants \cite{nrmp_data:2013}. To address the
problem of couples wanting to coordinate their placements, the NRMP
began allowing couples to jointly express their preferences over
residency programs. We call the resulting matching problem the Stable
Matching Problem with Couples (\smpc).

As with the standard \smp, the goal of \smpc is to find a matching
such that no pair (one from each side of the market) has an incentive
to defect from their assigned placement. Such a matching is called
\emph{stable}. In their seminal paper on stable matching, Gale and
Shapley provide a polytime Deferred Acceptance (DA) algorithm
guaranteed to find a stable, male-optimal (resident-optimal) matching
\scite{gale62}. They also proved many useful properties of \smp; e.g.,
a unique stable matching always exists, and a stable resident-optimal
matching always exists (a matching is resident-optimal if no resident is
better off in any other stable matching).

However, when a pair of doctors are allowed to jointly specify their
rankings, many of these desirable properties no longer hold. In
particular, \smpc is NP-complete \cite{ronn1990np}. Also, a stable
matching might not exist. Even when it does exist a resident-optimal
matching might not exist. \smpc is not the only NP-complete extension
of \smp. Previous literature on these other NP-complete extensions has
investigated techniques that are not guaranteed to return stable
matchings, such as local search techniques (e.g.,
\cite{marx2011stable,gelain2010local}). Another approach has entailed
encoding these NP-complete stable matching problems (e.g., the stable
matching problem with incomplete lists and ties) into other
NP-complete formalisms such as constraint satisfaction problems (CSP)
or satisfiability (SAT) (e.g.,
\cite{gent2002sat,gent2001constraint,DBLP:conf/sara/UnsworthP05}). The
advantage of such encodings is that effective solvers exist for CSP
and SAT. However, these works do not address \smpc.

Previous work on \smpc has mainly focused on providing sound but not
complete algorithms that extend the DA algorithm to deal with
couples \cite{roth99,kpr13}.
%A new approach was recently developed 
A new SAT encoding for \smpc was recently developed \cite{dpb_review},
showing very promising results when run on a competitive SAT solver
\cite{biere2013lingeling}. This encoding, SAT-E, scales fairly well,
and is sometimes able find stable matchings for problems on which the
DA-style algorithms cannot find any. Importantly, this new SAT
encoding allows for flexibility unavailable in DA-style algorithms:
adding new constraints and objectives can be as simple as adding
another constraint to the encoding.

In this paper, we use this flexibility to allow us to enumerate all
possible stable matches of an instance of \smpc.
%and, given some 
We also introduce a constraint that allows us to, given a stable
solution to \smpc, determine if that matching is resident Pareto
optimal, and if it is not, find a resident Pareto optimal matching. 
%In this paper, we use this flexibility to be able to empirically investigate the properties of instances of \smpc.
This new machinery allows us to empirically evaluate interesting
properties of \smpc. In particular, we are interested in the following
properties: when do instances of \smpc have a unique matching? If an
instance has multiple stable matchings, when is there a resident
optimal matching? How does the existence of a stable matching change
as the percentage of couples in the market increases?

Previous theoretical work has addressed some of these questions;
Ashlagi \emph{et al.}~and Kojima \emph{et al.}~proved theoretical
results regarding the guaranteed existence of a stable matching, if
the size of the market grows sufficiently faster than the percentage
of couples in the matching \scite{braverman14,kpr13}. Immorlica and
Mahdian proved that, for \smp, and a fixed length ROL (with everyone
drawing from some given distribution), the chance of drawing a problem
with a unique stable matching goes to 1 as the size of the market
goes to infinity. However, to our knowledge, no analogous results
exist for \smpc. We thus create our experiments in an attempt to
simulate the settings in these papers.

Also, previous work has conjectured that some of the properties
present in \smp may carry over to \smpc. Roth and Peranson
hypothesized that the reason that they see few opportunities
for strategic behaviour in their data is that there are few stable
matchings in large \smpc markets \scite{roth99}.
%This implies both that agents' manipulations can only happen from one stable matching to another, and 
Furthermore, Roth and Peranson cite that strategy-proofness and
resident optimality are desirable properties for stable matchings.

In this paper, we first provide a theoretical contribution, showing
that, for \smpc, we are not guaranteed resident strategy-proofness,
even in a problem with a unique stable matching. We do show, however,
when a matching has resident optimal solution, residents cannot
manipulate via truncation; they must manipulate via reordering their
preferences. We thus show that, for \smpc, uniqueness and resident
optimality are both insufficient for strategy-proofness, but, in
certain situations, it is ``harder'' for the residents to manipulate.
We then provide two extensions to the already existing SAT-E encoding
for \smpc, allowing us to enumerate all stable matches and find a
resident Pareto optimal match. We use these tools to empirically
explore the space of \smpc solutions. Our experiments suggest that
Roth and Peranson's observation that there were few instances where
residents could improve by truncation is due to the high fraction of
instances that have a resident optimal matching, a fraction that
appears to increase with instance size.

%%%%%%%%%%%%%%%%%%%%%%%%%%%%%%%%%%%%%%%%%%%%
%\input{dpb_ec_background.tex}
\section{Background}
We cast our formalization of \smpc (stable matching problem with
couples) in terms of the well known residency matching
problem, a labour market where doctors are placed into hospital
residency programs \cite{roth99}.

Informally, doctors wish to be placed into (matched with) some
hospital program, and programs wish to accept some number of doctors.
Both doctors and hospitals have preferences over who they are matched
with; expressed as ranked order lists (ROLs). Some doctors are paired
into couples, and these couples provide a joint ROL specifying their
joint preferences. Both doctors and hospitals can provide incomplete
lists, any alternative not listed is considered to be unacceptable.
That is, they would rather not be matched at all than matched to an
alternative not on their ROL. The \smpc problem to find a
\emph{stable} matching, such that no doctor-hospital pair has an
incentive to defect from the assigned matching.

\subsection{\smpc}
More formally, let \D be a set of doctors and \P be a set of programs.
Since there is a preference to be unmatched over an unacceptable
match, we use $\nomatch$ to denote this ``doctor'' or ``program''
alternative: matching a program $p$ to $\nomatch$ indicates that $p$
has an unfilled slot while matching a doctor $d$ to $\nomatch$
indicates that $d$ was not placed into any program. We use $\Dplus$
and $\Pplus$ to denote the sets $\D\cup \{\nomatch\}$ and
$\P \cup \{\nomatch\}$ respectively. 

The doctors are divided into two disjoint subsets, $\S \subseteq \D$
and $\D\setminus \S$. $\S$ is the set of single doctors and
$\D\setminus\S$ is the set of doctors that are in couple
relationships. We specify the couples by a set of pairs
$\C \subseteq (\D\setminus \S) \times (\D\setminus \S)$. If
$(d_1, d_2) \in C$ we say that $d_1$ and $d_2$ are each other's
partner. We require that every doctor who is not single (i.e., every
doctor in $\D\setminus \S$) have one and only one partner in $\C$.

Each program $p\in\P$ has an integer quota $\quota{p} > 0$. This quota
determines the maximum number of doctors $p$ can accept (i.e., $p$'s
capacity).
 
Everyone participating in the matching market has preferences over
their alternatives. Each participant $a$ specifies their preferences
in a ROL which lists $a$'s preferred matches from most preferred to
least preferred. The ROLs of single doctors $d\in \S$ contain programs
from $\Pplus$; the ROLs of couples $c \in \C$ contain pairs of
programs from $\Pplus \times \Pplus$; and the ROLs of programs
$p\in \P$ contain doctors from $\Dplus$. Every ROL is terminated by
$\nomatch$ (couple ROLS are terminated by $(\nomatch, \nomatch)$)
since being unmatched is always the least preferred option, but is
preferred to any option not on participant's $a$'s ROL. 

The order of items on $a$'s ROL defines a partial ordering relation
where $x \pgeq{a} y$ indicates that $x$ appears before $y$ on $a$'s
ROL or $x=y$. We define $\pgt{a}$, $\pleq{a}$, and $\pgt{a}$ in terms
of $\pgeq{a}$ and equality in the standard way. We say that $x$ is
\emph{acceptable} to $a$ if $x \pgeq{a} \nomatch$. (Note that
unacceptable matches are not ordered by $\pgeq{a}$.)

We define a choice function $\choice{p}()$ for programs $p\in \P$.
Given a set of doctors $R$, $\choice{p}(R)$ returns the subset of $R$
that $p$ would prefer to accept. $\choice{p}(R)$ is the maximal subset
of $R$ such that for all $d\in \choice{p}(R)$, $d\pgt{p}\nomatch$, for
all $d'\in R{-}\choice{p}(R)$, $d\pgt{p}d'$, and
$|\choice{p}(R)| \leq \quota{p}$. It is convenient to give the null
program a choice function as well: $\choice{\nomatch}(O) = O$, i.e.,
$\nomatch$ will accept any and all matches.

We use the notation $\ranked{a}$ to denote the set of options that $a$
could potentially be matched with. For single doctors $d$ and programs
$p$ this is simply the ROLs of $d$ ($\rol{d}$) and $p$ ($\rol{p}$).
For a doctor that is part of a couple $(d_1,d_2)$,
$\ranked{d_1} = \{p_1 | \exists p_2. (p_1,p_2)\in \rol{(d_1,d_2)}\}
\cup \{\nomatch\}$
and similarly for $d_2$. Note that $\nomatch\in \ranked{a}$.

%FB we need some of these things for someone to understand the appendix
Finally, we use the function $\rank{a}(x)$ to find the index of match
$x$ in $a$'s \rol{}: $\rank{a}(x) = i$ iff $x$ appears at index $i$
(zero-based) on $a$'s ROL, $\rol{a}$, or $|\rol{a}|$ if $x$ does not
appear on $\rol{a}$. Also we use $\rol{a}$ as an indexable vector,
e.g., if $x$ is acceptable to $a$ then $\rol{a}[\rank{a}(x)] = x$.

\subsection{Stable Matchings}
\begin{defn}[Matching] A \textbf{matching} $\match$ is a mapping from
$\D$ to $\Pplus$. We say that a doctor $d$ is matched to a program $p$
under $\match$ if $\match(d) = p$, and that $p$ is matched to $d$ if
$d \in \matchinv(p)$.
\end{defn}

We want to find a matching where no doctor-program pair has an
incentive to defect. We call the pairs that do have an incentive to
defect blocking pairs. First we define the condition
$\willaccept(p,R,\match)$ to mean that program $p$ would prefer to
accept a set of doctors $R$ over its current match $\matchinv(p)$: 
$\willaccept(p,R,\match) \equiv R\subseteq \choice{p}(\matchinv(p)
\cup R)$.
Note that if $p$ is already matched to $R$ under $\match$ then $p$
will accept $R$.

\begin{defn}[Blocking Pairs for a Matching $\match$] 
\begin{enumerate}
\item 
  A single doctor $d\in \S$ and a program $p\in \P$ is a
  \textbf{blocking pair} for
  $\match$ if and only if $p \pgt{d} \mu(d)$ and
  $\willaccept(p,\{d\},\match)$ 
\item 
  A couple $c=(d_1,d_2)\in \C$ and a program pair
  $(p_1,p_2)\in \Pplus\times
  \Pplus$
  with $p_1\neq p_2$ is a \textbf{blocking pair} for $\match$ if and
  only if
  $(p_1,p_2) \pgt{(d_1,d_2)} (\match(d_1), \match(d_2))$, 
  $\willaccept(p_1,\{d_1\}, \match)$, and $\willaccept(p_2, \{d_2\},
  \match)$.
%FB note that with will accept now being true if d_i is in p_i, this
%one condition works as we require (p1,p2) to be strictly preferred to
%the match (i.e., it can't be equal to the match
\item A couple $c=(d_1,d_2)$ and a program $p\in \P$ is a
  \textbf{blocking pair} for $\match$ if and only if
  $(p,p) \pgt{(d_1,d_2)}
  (\match(d_1), \match(d_2))$ and $\willaccept(p,\{d_1, d_2\},
  \match)$. 
\end{enumerate}
\end{defn}

\begin{defn}[Individually Rational Matching] A matching $\match$ is
  \textbf{individually rational} if and only if (a) for all $d\in\S$,
  $\match(d) \pgeq{d} \nomatch$, (b) for all $c=(d_1,d_2)\in \C$,
  $(\match(d_1), \match(d_2)) \pgeq{c} (\nomatch, \nomatch)$, and (c)
  for all $p\in \P$, $|\matchinv(p)| \leq \quota{p}$ and for all
  $d\in \matchinv(p)$ we have that $d \pgeq{p} \nomatch$.
\end{defn}

\begin{defn}[Stable Matching] A matching $\match$ is \textbf{stable}
  if and only if it is individually rational and no blocking pairs for
  it exist.
\end{defn}

\subsection{Residence Preferred Matchings}
The set of stable matchings can be quite large. When there are no
couples (i.e., in \smp) this set is always non-empty \cite{gale62} and
has a nice structure: it forms a lattice under the partial order
$\mgeq$ defined below \cite{knuth_smp}.

\begin{defn}[Residence Preferred Matchings]
  A matching $\match_1$ is resident preferred to another
  $\match_2$ when for all $a\in \S \cup \C$ we have that
  $\match_1(a) \pgeq{a} \match_2(a)$. We denote this
  relationship as $\match_1 \mgeq \match_2$. We also define $\mgt$,
  $\mlt$, and $\mleq$ in terms of $\mgeq$ and equality in the standard
  way. When $\match_1\mgt\match_2$ we say that $\match_1$
  \textbf{dominates} $\match_2$. 
\end{defn}
In other words, $\match_i \mgt \match_2$ if every doctor and couple
gets the same or a better match in $\match_1$ as in $\match_3$ and at
least one doctor or couple gets a better match.

The lattice structure and the fact there can only be a finite number
of matchings means that for \smp there is always a resident-optimal
matching.

\begin{defn}[Resident Optimal Matching] We say that a matching
  $\match$ is resident optimal, written $\Ropt(\match)$ when $\match$
  is stable and $\match\mgeq \match'$ for all other stable matchings
  $\match'$ ($\match$ dominates all other stable matchings).
\end{defn}
Note that, we restrict the resident-optimal matching to be stable. It
can be observed that for any resident $r$ (couple $(r_1,r_2)$)
$\match(r)$ ($(\match(r_1),\match(r_2)$) is the best match offered by
any stable matching when $\Ropt(\match)$.

Resident-optimality is generally cited as an important property for
stable matching algorithms (e.g., \cite{gale62,roth99}). In the
presence of couples however, stable matchings might not exist and
even when they do there have no lattice structure under $\mgeq$.
However, the $\mgeq$ is still well defined and for \smpc leads to
potentially multiple \emph{Pareto optimal} matchings.
\begin{defn}[Resident Pareto Optimal Matchings] We say that a matching
  $\match$ is resident Pareto optimal, written $\RPopt(\match)$ when
  $\match$ is stable and there does not exist another stable matching
  $\match'$ such that $\match' \mgt \match'$.
\end{defn}

It is easy to see that in \smpc every stable matching $\match$ either
is an $\RPopt$ or is dominated by an $\RPopt$ matching. This means
that an \smpc instance has an $\Ropt$ matching if and only if it has a
unique $\RPopt$ matching. As we will see in
Section~\ref{sec:empirical} it is often the case that \smpc instances
have more than one $\RPopt$ matching (and thus no $\Ropt$ matching).

\section{Strategy-Proofness and $\Ropt$ Existence for \smpc}
For an \smp instance the mechanism that returns the $\Ropt$ matching
in response to the stated preferences is strategy-proof with respect
to the residents. That is, it is a dominant strategy for each resident
to state their true preferences \cite{roth_two_sided}.

To our knowledge, no strategy-proofness results exist for \smpc.
However, this result for \smp leads us to hypothesize that if we
restrict our attention to \smp instances in which an $\Ropt$ matching
exists, then a mechanism that returns the $\Ropt$ matching is
strategy-proof. Unfortunately, this is not true; furthermore, the even
stronger condition of there being a unique stable matching for a
problem instance doesn't guarantee strategy-proofness for the
residents (let alone the hospitals). However, we do show that when
there is an $\Ropt$ matching, residents have no incentive to misreport
their preferences via truncation.

\begin{thm} Restricting to instances of \smpc in which an $\Ropt$
  matching exists, let $y()$ be a mechanism that maps from such an
  instance to the $\Ropt$ matching. Then $y$ is strategy-proof for
  residents if residents are only allowed to manipulate via
  truncations. However, $y$ might not be
  strategy-proof if residents are allowed to manipulate via
  reordering.
\end{thm}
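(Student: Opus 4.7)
The plan is to attack the two claims independently.

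For Part 1 (truncation cannot profit), the central step is a stability-transfer lemma: if $r$ is any resident (single or couple) whose truthful ROL is replaced by a truncation to give a new instance $I'$, then every matching $\match'$ that is stable in $I'$ and satisfies $\match'(r)\neq \nomatch$ is also stable in the original instance $I$. Individual rationality transfers because the truncated ROL is a prefix of the true one and no other participant's preferences change. For blocking pairs, any pair not involving $r$ blocks $\match'$ in $I$ iff it blocks in $I'$, since preferences, matching, and $\willaccept$ conditions are identical. A blocking pair involving $r$, say with option $x \pgt{r} \match'(r)$ on the true ROL, splits into two cases: $x$ still appears on the truncated ROL, giving a blocking pair in $I'$ and contradicting stability of $\match'$ there; or $x$ was cut, placing it strictly below $\match'(r)$ on the true ROL (since $\match'(r)$ itself is on the truncated list) and contradicting $x \pgt{r} \match'(r)$. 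The argument carries over verbatim to a couple by applying it to the joint ROL.

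With the lemma in hand, resident-optimality of $\match^*$ in $I$ forces $\match^*(r) \pgeq{r} \match'(r)$ whenever $\match'(r)\neq \nomatch$. The corner case $\match'(r)=\nomatch$ is handled by individual rationality of $\match^*$, which gives $\match^*(r) \pgeq{r} \nomatch$. Either way, $r$ gains nothing by truncating, establishing truncation-strategy-proofness.

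For Part 2 (reordering may profit), I would exhibit a small \smpc instance containing a single doctor $s$ and at least one couple such that the truthful instance has an $\Ropt$ matching assigning $s$ to some program $p$, yet after a suitable reordering of $s$'s ROL the manipulated instance again has an $\Ropt$ matching in which $s$ receives $p'$ with $p' \pgt{s} p$ under true preferences. The construction exploits the couple's complementarities: by reordering, $s$ can change which pair of programs is feasible for the couple, thereby freeing the preferred program $p'$ for $s$. This is possible precisely because the lattice structure that underwrites strategy-proofness of the Gale--Shapley mechanism for \smp is lost in \smpc.

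The main obstacle is Part 2. Finding an instance that (i) admits an $\Ropt$ matching both truthfully and after the manipulation, (ii) delivers a strict gain to the manipulator, and (iii) is small enough to verify by hand requires tuning the hospital preferences so that the complementarity bites in exactly one direction. I expect a minimal example to use one single doctor, one couple, and three or four programs, with the couple's joint ROL ranking two distinct pairs so that the manipulator's reordering tips feasibility from one pair to the other. Part 1 by contrast is a restriction of classical \smp truncation reasoning and becomes routine once the transfer lemma is isolated.
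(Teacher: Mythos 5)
Your Part 1 is correct and follows essentially the same route as the paper: the key observation in both is that every stable matching of the truncated instance in which the manipulator is matched to something other than $\nomatch$ is also stable in the original instance, so the $\Ropt$ matching of the original instance weakly dominates (for the manipulator, under true preferences) anything the manipulation can produce, with the $\nomatch$ case disposed of by individual rationality. Your version is slightly cleaner in that you case-split on whether $\match'(r)=\nomatch$ rather than on where the truncation point falls relative to $\match^*(r)$, and you spell out the blocking-pair transfer more carefully than the paper does; both arguments are sound and buy the same thing.

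Part 2, however, has a genuine gap: the second claim of the theorem is a pure existence statement, and its proof \emph{is} the counterexample. You describe a search strategy (``I would exhibit a small instance\ldots'', ``requires tuning the hospital preferences\ldots'') but never produce or verify an instance, so the negative half of the theorem is unproved. Two concrete points of caution about your plan. First, your guess that one single doctor, one couple, and three or four programs suffice is optimistic; the paper's counterexample uses one single doctor, \emph{two} couples, and five programs, and the two couples are doing real work: the single's reordering (reporting $b\succ d\succ c\succ a$ instead of $a\succ b\succ c\succ d$) displaces one couple from $(a,d)$ onto the other couple's preferred pair, cascading the second couple down to its second choice and thereby vacating $b$ for the single. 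A single couple gives you only one complementarity to exploit, and you would still need to verify that no blocking pair arises in the manipulated instance. Second, you correctly flag requirement (i) --- that the manipulated instance must still admit an $\Ropt$ matching so that $y$ is defined on it --- but note that verifying this requires checking \emph{all} stable matchings of the manipulated instance, not just exhibiting one; the paper handles this by arranging for both the truthful and the manipulated instances to have a \emph{unique} stable matching, which is the easiest way to discharge the obligation and is worth building into your construction from the start.
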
 

\begin{proof}
  Let $\mu^*$ be the $\Ropt$ matching.

  Part I: Truncations are strategy-proof. Suppose resident or couple
  $d$ truncates its preferences at program or pair of programs $p$.
  Let $\Omega$ be the set of stable matchings before the truncation
  and let $\Omega'$ be the set of stable matchings after the
  truncation. Because $d$'s preferences above the truncation point are
  the same after truncation, any matching $\match \in \Omega'$ where
  $\match(d) \neq \nomatch$ (resp.
  $\match(d) \neq (\nomatch, \nomatch)$ for a couple $d$) is in
  $\Omega$: the stability conditions contributed by
  other residents and couples are the same after $d$'s truncation.
  However, $\Omega'$ may contain stable matchings where $d$ is matched
  to $nil$ that are not in $\Omega$. Thus
  $\{ \match \in \Omega' : \match(d) \neq \nomatch \} \subseteq \Omega
  $.

  There are two cases: either 1) $d$ truncates below $\match^*(d)$ or
  2) $d$ truncates at or above $\match^*(d)$. In the former, it can be
  seen that $\mu^*$ is a stable matching of $\Omega'$ and the
  only
  matches in $\Omega'$ not in $\Omega$ have $d$ matched to $\nomatch$.
  Hence, $\mu^*$ is an $\Ropt$ matching for $\Omega'$ and $d$ cannot
  improve. In the
  latter, $\Omega'$ consists of
  matchings where $d$ is matched to $\nomatch$ because any
  $\match \in \Omega'$ such that $\match(d) \succ_d \match^*(d)$ would
  be in $\Omega$ and would contradict $\mu^*$'s $\Ropt$ status. Hence,
  $d$ cannot improve in this case either.

  Part II: Reordering is not strategy-proof. We provide a
  counterexample. See Figure \ref{fig:strategy-proof_counter}. The
  preferences provided above the line show doctors' and programs' true
  preferences. Couples are identified by their doctor-doctor pair,
  and give joint preferences as expected. Given these true
  preferences, only one stable matching exists (shown in the
  third column above the line). However, even though only one
  stable matching exists (and thus this matching is $\Ropt$), the
  single
  doctor $r_0$ has incentive to misreport preferences via reordering.
  Instead of reporting $a \succ b \succ c \succ d$ ($r_0$'s true
  preferences) $r_0$ can be matched to $b$ instead of $c$ by reporting
  $b \succ d \succ c \succ a$. (All other participants in the market
  report their true preferences.) This results in the matching seen in
  the third column below the line, a matching that is not stable under
  the original preferences, where single doctor $r_0$ is better off
  than when they reported their true preferences.
\end{proof}

\begin{small}
\begin{figure*}
\begin{align*}
&\textrm{Doctor preferences} && \textrm{Program preferences} & \textrm{Unique }&\textrm{stable matching} \\
r_0&: a \succ b \succ c \succ d & a&: r_3\succ r_0 \succ r_1 & \mu(r_0) &= c \\
(r_1,r_2)&: (b, e) \succ (a, d) & b&: r_1\succ r_0 & \mu((r_1,r_2)) &= (b,e) \\
(r_3,r_4)&: (a, d) \succ (c, e) & c&: r_3\succ r_0 & \mu((r_3,r_4)) &= (a,d) \\
&& d&: r_0\succ r_2\succ r_4& \\
&& e&: r_4\succ r_2& \\
\\
\hline
\\
&\textrm{Doctor preferences} && \textrm{Program preferences} & \textrm{Unique }&\textrm{stable matching} \\
r_0&: b \succ d \succ c \succ a & a&: r_3\succ r_0\succ r_1 & \mu(r_0) &= b \\
(r_1,r_2)&: (b, e) \succ (a, d) & b&: r_1\succ r_0 & \mu((r_1,r_2)) &= (a,d) \\
(r_3,r_4)&: (a, d) \succ (c, e) & c&: r_3\succ r_0 & \mu((r_3,r_4)) &= (c,e) \\
&& d&: r_0\succ r_2\succ r_4& \\
&& e&: r_4\succ r_2&
\end{align*}
\caption{Counterexample for Strategy-Proofness in the Reordering Case}
\label{fig:strategy-proof_counter}
\end{figure*}
\end{small}

%Note that in Roth and Pe
Roth and Peranson investigate strategy-proofness for both residents
and hospitals empirically in the NRMP setting \scite{roth99}. They do
this via truncations. In their setting, they do find that residents
can benefit (albeit minimally) from truncating their true preferences.
We then hypothesize that this is due either: i)
the instances in their setting frequently do not contain an $\Ropt$ matching, or ii)
their algorithm (RP99) does not find the $\Ropt$ matching
even when it does exist. In Section~\ref{sec:empirical} we will see
that both of these situations can occur. That is, on many problems no
$\Ropt$ matching exists (the best one can do is an $\RPopt$ matching).
And even when an $\Ropt$ matching does exist, their algorithm might
not find it, reporting some other stable matching instead.

%%%%%%%%%%%%%%%%%%%%%%%%%%%%%%%%%%%%%%%%
%\input{dpb_ec_sat_extension.tex}
\section{Solving \smpc}
In this section we examine solution methods for finding stable
matchings for \smpc. The standard approach has been to find an
extension of the deferred acceptance algorithm that can handle
couples. However, these extensions are incomplete: they are unable to
determine whether or not a stable matching exists, and even when a
stable matching does exist they might not be able to find one.

Since \smpc is known to be NP-Complete \cite{ronn1990np} it is also
possible to encode it as another NP-Complete problem. In particular,
it can be encoded as a SAT (satisfiability) problem. The advantage of
doing this is that there has been a great deal of work on SAT solvers
and state-of-the-art SAT solvers are routinely able to solve SAT
problems involving millions of clauses.\footnote{Of course, since SAT
  is NP-Complete these solvers can also be foiled by small SAT
  instances. Nevertheless, they often exhibit very good performance
  on instances that arise from real world problems.} The advantage of
encoding \smpc to SAT and then using a SAT solver is that this
approach is complete: given sufficient compute resources it will
either find a stable matching or prove that none exists.

\subsection{Existing DA Algorithms for \smpc}
\label{sec:da}
The basic principle of DA algorithms \cite{gale62} is that members of
one side of the market propose down their ROLs while the other side
either rejects those proposals or holds them until they see a better
proposal: once all proposals have been made that side accepts the
proposal they have not rejected (acceptance of proposals is deferred
until the end).

% An additional attractive property of GS is that it
% could be made to produce an optimal matching for either side of the
% market. As discussed in \cite{roth99}, doctor optimal matchings are
% considered to be important to encourage doctor participation in labour
% market clearinghouses like the National Resident Matching Program
% (NRMP).

Roth and Peranson develop a DA algorithm, RP99, capable of dealing
with couples \scite{roth99}. This well known algorithm has been used
with considerable success in practice, including most famously for
finding matches for the NRMP which typically involves about 30,000
doctors \cite{nrmp_data:2013}. Using the description in \cite{roth99}
we have implemented RP99. RP99 employs an iterative scheme. After
computing a stable matching for all single doctors, couples are added
one at a time and a new stable matching computed after each addition.
The algorithm uses DA at each stage to find these stable matchings.
Matching a couple can make previously made matches unstable and in
redoing these matches the algorithm might start to cycle. Hence, cycle
checking (or a timeout) is sometimes needed to terminate the algorithm.

Kojima et al. develop a simple ``sequential couples algorithm''
\scite{kpr13}.
% In \cite{kpr13} a simple ``sequential couples algorithm'' is
% developed.
This algorithm is analyzed to prove that the probability of a stable
matchings existing goes to one under certain assumptions. However,
this simple algorithm is not useful in practice as it declares failure
under very simple conditions. Kojima et al. also provide a more
practical DA algorithm, KPR, that they use in their experiments. We
have implemented KPR. The main difference between KPR and RP99 is that
KPR deals with all couples at the same time---it does not attempt to
compute intermediate stable matchings. Interestingly, this makes KPR
much more successful (and efficient) in our experiments.

Finally, Ashlagi et al. extend the analysis of Kojima et al.,
developing
% Finally, in the analysis of Kojima et al. is
a more sophisticated ``Sorted Deferred Acceptance Algorithm'' and
analyzing its behaviour \scite{braverman14}. This algorithm is
designed mainly to be amenable to theoretical analysis rather than
for practical application. We have not implemented this algorithm so
we do not include any empirical results about its performance.

\subsection{Solving \smp via SAT}
We have developed an effective encoding of \smpc into SAT. This
encoding and its performance is reported on in
\cite{dpb_review}.\footnote{For the reader's convenience we have
  included a description of this encoding in the
  appendix. See \cite{dpb_review} for full details.}

We call this encoding SAT-E, and given an \smpc instance
$\langle \D, \C, \P, \mathit{ROLs} \rangle$, where $\mathit{ROLs}$ is
the set of all participant ROLs,
SAT-E($\langle \D,\C, \P, \mathit{ROLs} \rangle$) can be viewed as a
function that returns a SAT encoding in CNF (conjunctive normal form),
which is the input format taken by modern SAT solvers.

There are three things to know about SAT-E.
\begin{enumerate}
\item For any \smpc instance
  $\I = \langle \D,\C, \P, \mathit{ROLs} \rangle$, the
  satisfying models of SAT-E stand in a one-to-one correspondence with
  the stable
  models of $\I$. This allows us to enumerate all stable
  models (see below).

\item SAT-E includes the set of propositional variables $m_d[p]$ one
  for each $d\in \D$ and $p\in \D$ such that $p\pgeq{d} \nomatch$. In
  any satisfying model, $\pi$, $m_d[p]$ is true if and only if
  $\mu(d)=p$ in the stable matching $\mu$ corresponding to $\pi$.

\item SAT-E also includes the set of propositional variables $m_c[i]$
  for each $c\in \C$ and $i$ in the range $[0,|\rol{c}-1|]$ where
  $\rol{c}$ is $c$'s ROL. In any satisfying model, $\pi$, $m_c[i]$ is
  true if and only if in $\mu$, the stable matching corresponding to
  $\pi$, $c$ is matched to a program pair they rank $i$
  or above in their ROL. 
\end{enumerate}

Given SAT-E and the above three facts we can develop two simple yet
powerful algorithms: one for enumerating all stable models and the
other for finding a Pareto Optimal matching, an $\RPopt$ matching,
that dominates a stable matching $\mu$.

\begin{algorithm}[t]
  \dontprintsemicolon
  \SetCommentSty{textsl}
  \SetNoLine
  \SetKwComment{Comment}{\textbf{/* }}{\textbf{*/}}
  \SetKw{brk}{break}
  \SetKw{chse}{choose}
  \KwIn{$\I = \langle \D,\C, \P, \mathit{ROLs} \rangle$ an \textbf{\smpc} instance}
  \KwOut{Enumerate all stable models of a $\I$}
  CNF $\mbox{} \leftarrow \mbox{}$ SAT-E(\I)\;
  \While{\true}{
    (sat?,$\pi$) $\mbox{}\leftarrow\mbox{}$  SatSolve(CNF)\;
    \tcc{SatSolve returns the status (sat or unsat) and a satisfying model $\pi$ if sat}
    \If{sat?}{
      $\match \leftarrow \mbox{}$ stable matching corresponding to $\pi$\;
      $c \leftarrow \{ \lnot m_d[p]\, |\, \match(d) = p\}$\;
      CNF $\mbox{}\leftarrow \mbox{}$ CNF $\mbox{} \cup \{c\}$\; 
      \textbf{enumerate}($\match$)\;
    }
    \Else{
      \Return{} \tcp*{All stable matchings enumerated.}
    }
  }
  \caption{
    Enumerate all stable models of an inputted \smpc instance}
  \label{algo:all_matches}
\end{algorithm}

Algorithm~\ref{algo:all_matches} is our algorithm for enumerating all
stable models. It uses a sequence of calls to a SAT solver to do this.
It first constructs the SAT-E encoding of the \smpc instance then
enters a loop. Each time through the loop a new stable model is found
by the SAT solver, and a \emph{blocking clause} $c$ is added to the
SAT-E encoding. This clause $c$ is a disjunction that says that no
future solution is allowed to return the same stable model (one of the
mappings $\mu(d) = p$ of the corresponding stable model must be
different, i.e., one of the variables $m_d[p]$ made true by $\mu$ must
be false in every future matching).

\begin{algorithm}[t]
  \dontprintsemicolon
  \SetCommentSty{textsl}
  \SetNoLine
  \SetKwComment{Comment}{\textbf{/* }}{\textbf{*/}}
  \SetKw{brk}{break}
  \SetKw{chse}{choose}
  \KwIn{$\I = \langle \D,\C, \P, \mathit{ROLs} \rangle$ an
    \textbf{\smpc} instance
  and $\match$ a stable matching for $\I$}
  \KwOut{An $\RPopt$ matching for $\I$ that dominates $\match$}
  \While{\true}{
    CNF $\mbox{} \leftarrow \mbox{}$ SAT-E(\I)\;
    $c \leftarrow \{ \lnot m_d[p]\, |\, \match(d) = p\}$\;
    CNF $\mbox{}\leftarrow \mbox{}$ CNF $\mbox{} \cup \{c\}$\; 
    \For{\,$d \in \S$}{
      $c_d \leftarrow \{ m_d[p]\, |\,  p \pgeq{d} \match(d)\}$\;
      CNF $\mbox{}\leftarrow \mbox{}$ CNF $\mbox{} \cup \{c_d\}$\; 
    }
    \For{$c\in C$}{
      $c_c \leftarrow \{ m_c[i]\, |\, i = \rank{c}(\match(c))\}$\;
      CNF $\mbox{}\leftarrow \mbox{}$ CNF $\mbox{} \cup \{c_d\}$\;
    }
    (sat?,$\pi$) $\mbox{}\leftarrow\mbox{}$  SatSolve(CNF)\;
    \If{sat?}{
      $\match \leftarrow \mbox{}$ stable matching corresponding to $\pi$\;
    }
    \Else{
      \Return{\match} \tcp*{Return last match found.}
    }
  }
  \caption{
    Given an \smpc instance and a stable matching $\match$ find a
    dominating $\RPopt$ matching.}
  \label{algo:rpopt_match}
\end{algorithm}

Algorithm~\ref{algo:rpopt_match} is our algorithm for finding a
$\RPopt$ matching that dominates an inputted stable matching $\match$.
The algorithm takes an $\smpc$ instances as input and constructs the
SAT-E encoding for that instance. It then blocks the match $\match$
from being a satisfying solution (using the same kind of clause as
Algo.~\ref{algo:all_matches}) and also forces the next match found to
be $\mgeq$ the current match. It accomplishes this by adding a clause
for each single doctor $d$ that says that $d$ must be matched to a
program it ranks at least as high as $\match(d)$, and for every couple
$c$ a (unit) clause that says that $c$ must be matched to a pair of
programs it ranks at least as highly as $\match(c)$. This causes the
new match to be $\mgeq$ to the current match, and since the new match
cannot be equal it must be $\mgt$ the current match. That is, the new
match must dominate the current match. If no dominating match can be
found, the current match is $\RPopt$ and we return it.

\section{Empirical Results}
\label{sec:empirical}
In this section we report on various experiments we performed. We used
the state-of-the art SAT solver LingeLing \cite{biere2013lingeling} to
solve our encoding SAT-E; Algorithm~\ref{algo:all_matches} to find all
stable matches; and our implementation of the two DA algorithms RP99
and KPR, described in Section~\ref{sec:da}, when testing the
effectiveness of DA algorithms.

\subsection{Statistical models}
We experiment with randomly generated \smpc instances. In our
experiments we confine our attention to instances in which all program
quotas are 1. So these are one-to-one matching problems with couples.

We generate ordered lists from sets using a common random sampling
procedure. To obtain an ordered list $L$ of size $k$ from a set $S$,
given that we already generated the first $i$ items, we draw elements
from $S$ independently and uniformly at random until we find an
element $e$ that does not already appear in $L$. Once we have selected
such an $e$ it becomes the $i+1$'th item of $L$. This process stops
when $L$ has $k$ items.

We use market sizes of size $n$ where $n$ ranges from 200 to 20,000,
and varying percentages $x$ of couples. For each $n$ we include $n$
doctors in $\D$ and $n$ programs in $\P$. Among the doctors we mark
$1-x$ percent as being singles and the remaining $x$ percent are
paired into couples. (Thus we have $x*n/2$ couples and $n-x*n$
singles). For each single we randomly generate an ordered ROL of
length 5 from $\P$ (using the procedure described above), each couple
has an ROL of length 15 randomly generated from
$\Pplus \times \Pplus - \{(\nomatch, \nomatch)\}$, and each program
has an ROL that includes all doctors that ranked it (including one
member of a couple) and is randomly generated from that same set. For
each setting of the parameters, we drew 50 problem instances.

\subsection{Experiments}
\subsubsection{Existence of a Stable Matching in \smpc}
Our first experiment involves running a SAT solver on each \smpc
instance. This allows us to determine how often these instances had a
stable match. Figure \ref{fig:immorlica_pct_satisfiable} shows the
trend. We see that as the fraction of residents who are in couples
increases, the fraction of satisfiable instances drops fairly rapidly.
Interestingly, it appears as though this effect is independent of the
size of the market; for all market sizes drawn, the fraction of
satisfiable instances is fairly consistent, and follows the same
trend. These results agree with the predictions of Ashlagi et al. who
showed that in a random model similar but not identical to ours
whenever the number of couples grows linearly with market size, the
probability of no stable matching existing is a constant. This agrees
with our data that market size has little effect. There model was not,
however, able to predict what that probability might be, nor how it
might change as the percentage of couples increases. 

It can also be noted that the results presented here (20,000 couples
with an average single's ROL of size 5) are of a size comparable to
the NRMP, where there are 34,355 residents with an average single's
ROL of size 11 \cite{nrmp_data:2013}. On NRMP Roth and Peranson
\scite{roth99} remark that historically no instance was found not to
have a stable matching. This might arise from (a) the percentage of
couples being very small for NRMP, or (b) extra structure in this real
problem not present in our random problems. If the cause is (a) there
might be practical concerns for the ability of clearinghouse
mechanisms to find stable matchings if the percentage of couples
rises.

\begin{figure}
\center
\includegraphics[width=1.0\columnwidth]{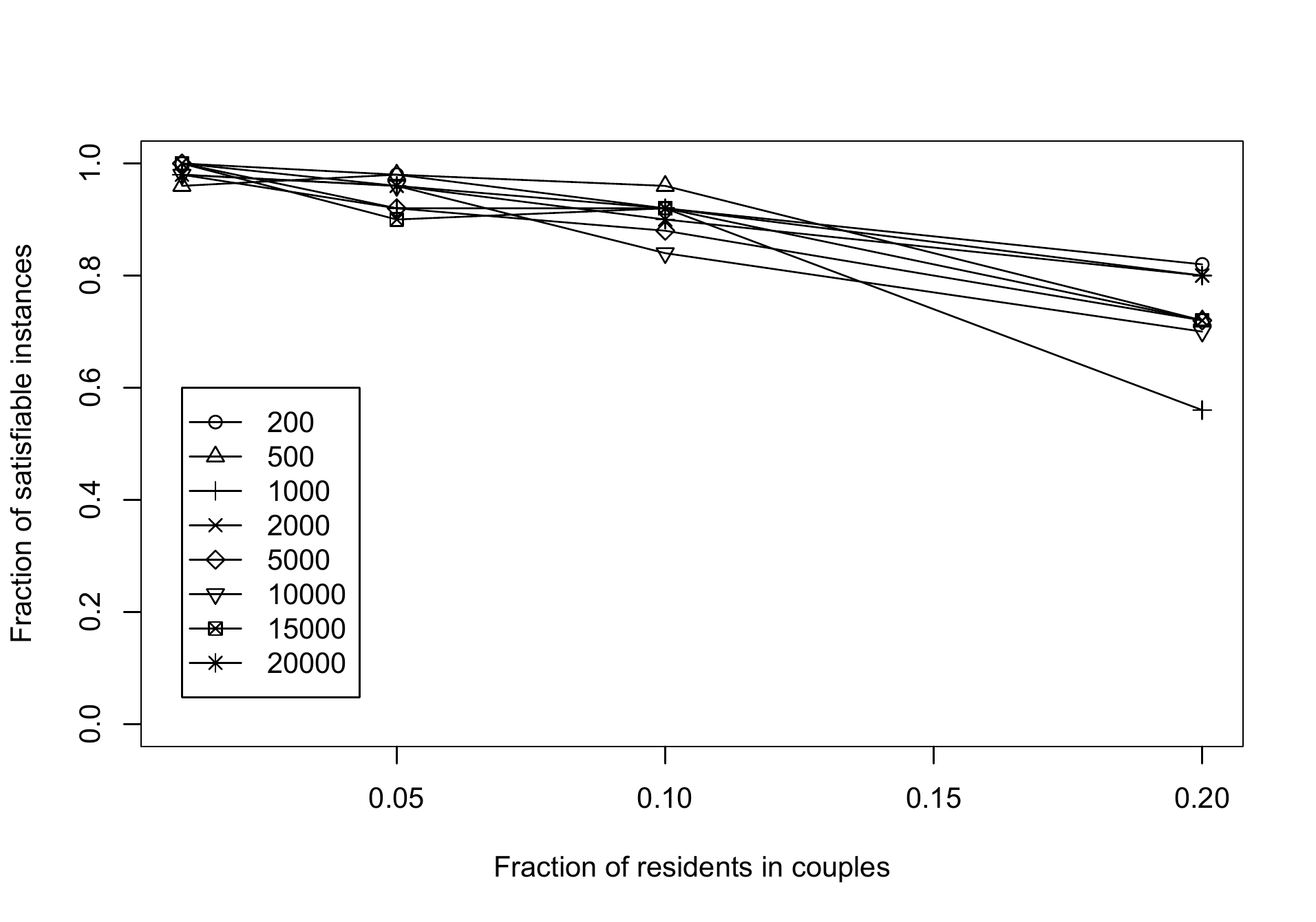}
\caption{Fraction of satisfiable instances drawn, varying the fraction of couples in the match, and the size of the market.} 
\label{fig:immorlica_pct_satisfiable}
\end{figure}

\subsubsection{Existence of a Unique Stable Matching in \smpc}
Figure \ref{fig:immorlica_pct_unique} shows the fraction of instances
that had a unique stable matchings as the size of the market
increases, and as the percentage of the couples in the market
increases. Importantly, this figure only counts the fraction of
instances with a unique matching among those instances that have at
least one matching. This data was generated by running
Algo.~\ref{algo:all_matches}.

As shown in Figure \ref{fig:immorlica_pct_satisfiable}, the fraction
of instances where a stable matching exists changes as the percentage
of couples in the market changes. So to remove that underlying trend,
we only look instances where a stable matching is known to exist.

Note that, while these curves generally mimic the pattern we'd expect
to see (the fraction of unique matchings tends to increase as the
market size gets larger), the curve appears to level out and reach an
asymptote much earlier than expected. In our experiments, no setting
reached more than 84\% unique matchings.

Immorlica and Mahdian proved that, for \smp (where at least one stable
matching is always true) with fixed length ROLs (and everyone drawing
from some given distribution), the chance of drawing a problem with
only one stable matching goes to 1 as the size of the market goes
to infinity \scite{immorlica2005marriage}.

Our trends for \smpc, however, appear somewhat asymptotic, especially
when the percentage of couples is low. It thus appears that, at least
for the size of problems relative to the size of ROL investigated
here, that the Immorlica and Mahdian results do not apply. This could
be because their results give very large bounds that require the size
of the problems to be much larger than drawn here. So our results
might not be on sufficiently large markets to capture the phenomenon
they describe. Alternately, this could be because their results do not
hold for \smpc.

As an aside, when comparing the fraction of unique satisfiable
instances out of all problem instances drawn, the numbers look even
worse; with a market size of 20,000 and 20\% couples, the percentage
of problems drawn with only one stable matching drops to 50\%.

%However, this is not particularly surprising; in some ways, if we
% require that the Immorlica and Mahdian results apply to \emph{all}
% \smpc problems drawn (not just the satisfiable ones), the Ashlagi et
% al. theoretical results regarding the existence of a stable match satisfiability of an \smpc
% instance as the percentage of couples increases and the Immorlica and
% Mahdian results regarding the guarantee of a unique solution are at
% odds.

\begin{figure}
\center
\includegraphics[width=1.0\columnwidth]{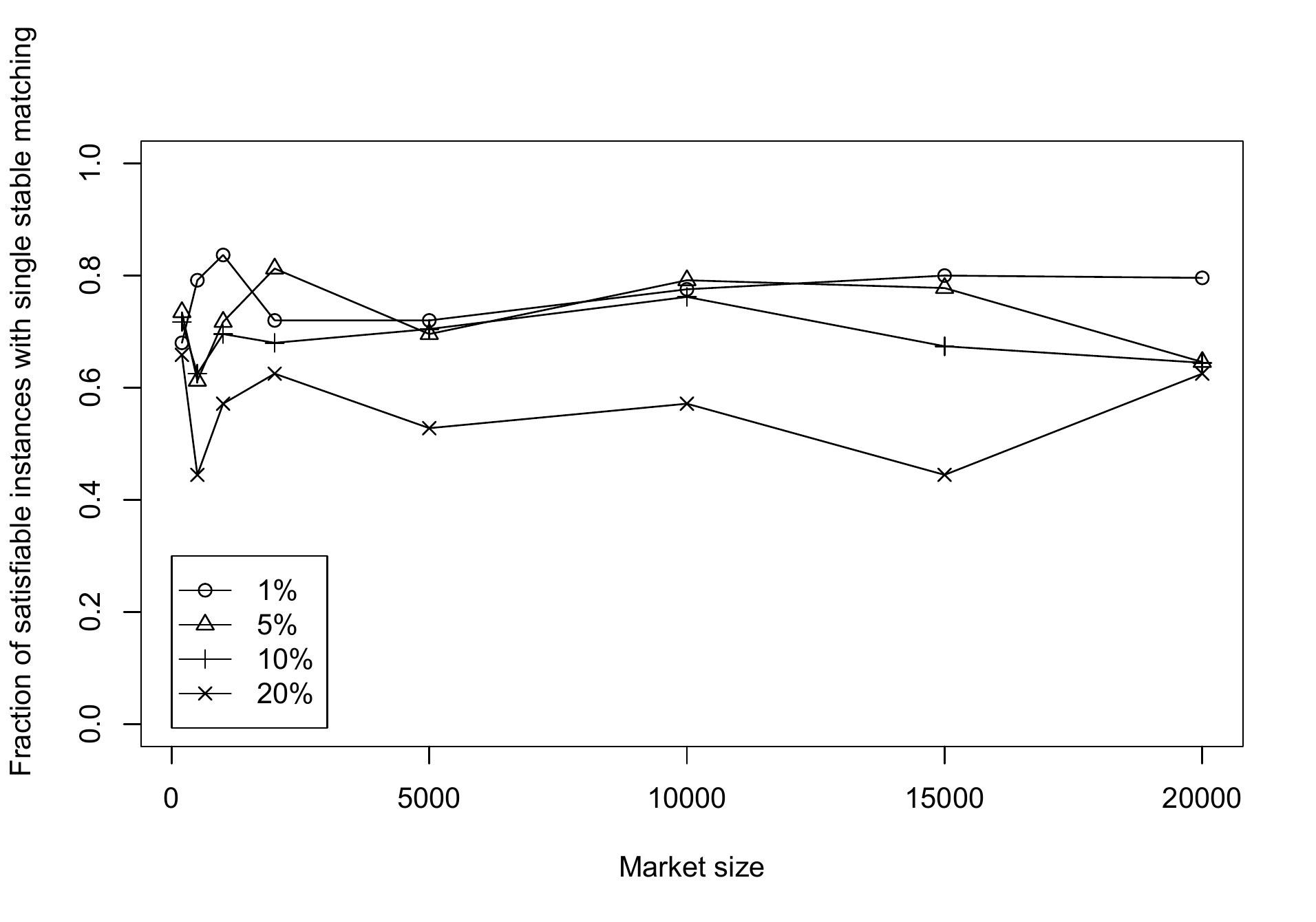}
\caption{Fraction of satisfiable instances with single stable matching, varying size of the market and percentage of couples in the market.}
\label{fig:immorlica_pct_unique}
\end{figure}

%TODO
%\textbf{J: graph: setting with high variance, do a histogram for \# Stable matches}

\subsubsection{Properties of the Set of Matchings for \smpc}
We next investigate the properties of the set of matchings of each
\smpc instance as generated by Algorithm~\ref{algo:all_matches}. We
have shown that, at least under the settings explored in this paper,
unique stable matches are not as common as we may have hypothesized.
We also have proven that, even if a unique matching exists (and hence
it is an $\Ropt$ matching), that is no guarantee of strategy-proofness
on either the residents' or programs' side for \smpc.

Nevertheless, we have also shown that if residents are only allowed to
truncate their preferences (which is arguably the easiest misreporting
strategy), then they have no incentive to misreport if an $\Ropt$
matching exists. Since an $\Ropt$ matching can exist even when there
more than one stable matching, this can be a useful situation that
covers more cases than having only one stable matching.

%With respect to the number of stable matchings that exist f
Figure \ref{fig:immorlica_avg_size} shows our results about this
question. The boxes in this plot represents the average number of
stable matches for different problem parameters; the whisker
represents one standard deviation. For the problems we investigated,
the average number of stable matchings was quite low (1.30), and did
not seem to be affected much by either the size of the market, or the
percentage of couples involved in the market. Note that exceedingly
few problem instances have more than 3 stable matches, no matter how
the parameters are set. Also, in each setting of the parameters,
problem instances with one and two stable matchings are very common.
Note that, unlike the previous analysis presented in Figure
\ref{fig:immorlica_pct_satisfiable}, we are looking at \emph{all}
problem instances, not just the satisfiable ones.

In the problems we investigated, 10.63\% had no stable matching, 
61.91\% had a unique stable matching,
21.89\% had two stable matchings,
0.31\% had three stable matchings,
4.69\% had four stable matchings,
0.13\% had six stable matchings, and
0.44\% had eight stable matchings.

\begin{figure}
\center
\includegraphics[width=1.0\columnwidth]{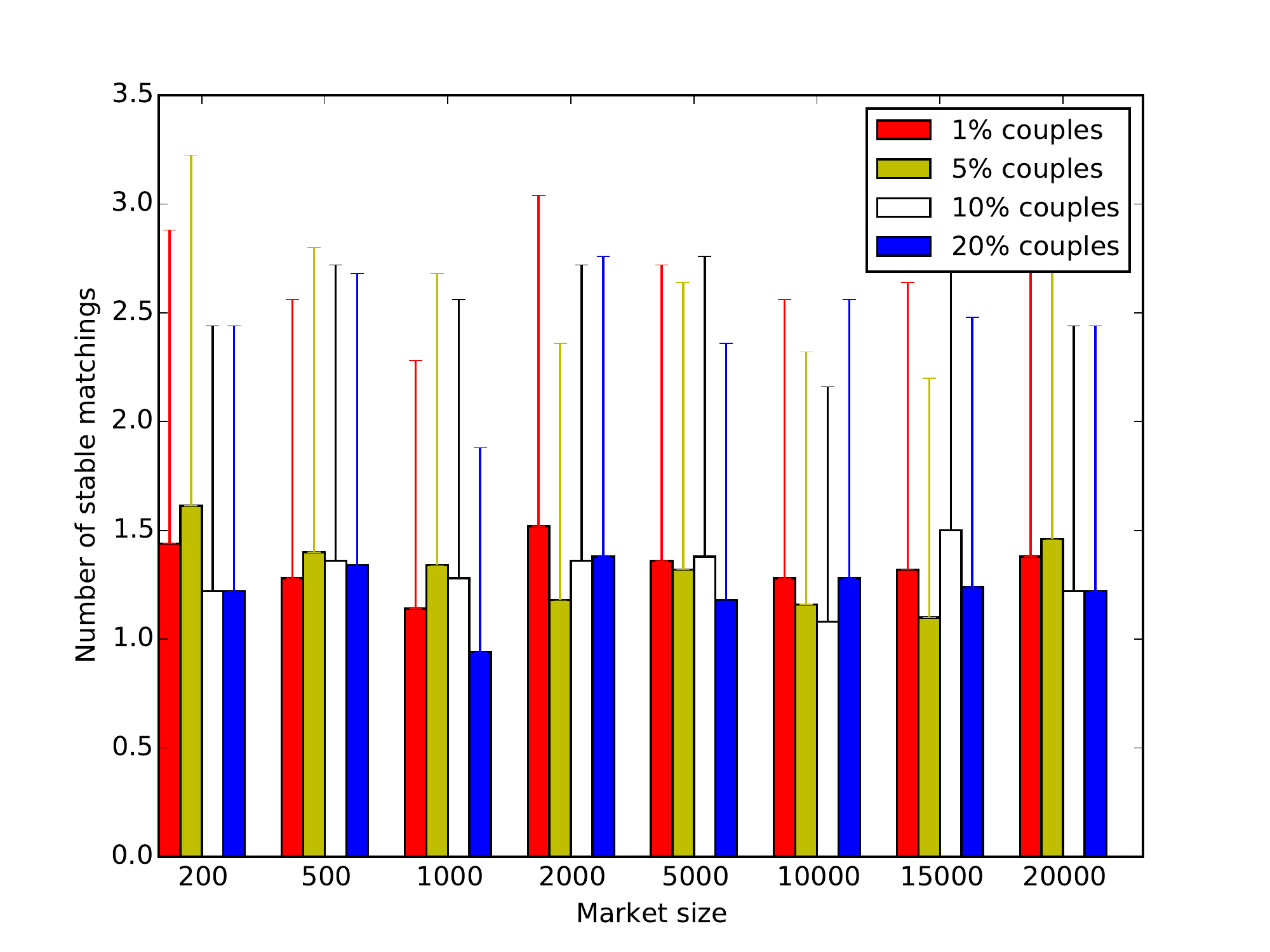}
\caption{Number of stable matchings versus market size and percentage of couples present in the market. Box represents average number of stable matchings; whisker represents one standard deviation}
\label{fig:immorlica_avg_size}
\end{figure}

When a problem instance has more than one stable matching, we are
particularly interested in how many $\RPopt$ matchings exist. If only
one $\RPopt$ matching exists this matching is also an $\Ropt$
matching, and our theoretical result applies: residents have no
incentive to misreport their preferences by truncation. 78.5\% of all
problem instances drawn had an $\Ropt$ matching, which is many more
instances than simply had a single stable matching (61.91\%).
Additionally, if we again look at the Immorlica and Mahdian results,
but with respect to fraction of satisfiable instance that had an
$\Ropt$ matching instead of the more restrictive condition of having
only one stable matchings, we see a trend much closer to what we might
expect. See Figure \ref{fig:immorlica_pct_rposm}. Furthermore, many
instances with a low percentage of couples almost always had an
$\Ropt$ matching; when the market had 1\% couples, 98.5\% of all
satisfiable instances had an $\Ropt$ matching; with 5\%, 92.7\% had an
$\Ropt$ matching. Importantly, this means that for many of the problem
instances that we drew, residents have no incentive to misreport their
preferences via truncation. However, further investigation is required
to ascertain the relationship between the percentage of couples in a
matching and the fraction of instances that have an $\Ropt$ matching.

\begin{figure}
\center
\includegraphics[width=1.0\columnwidth]{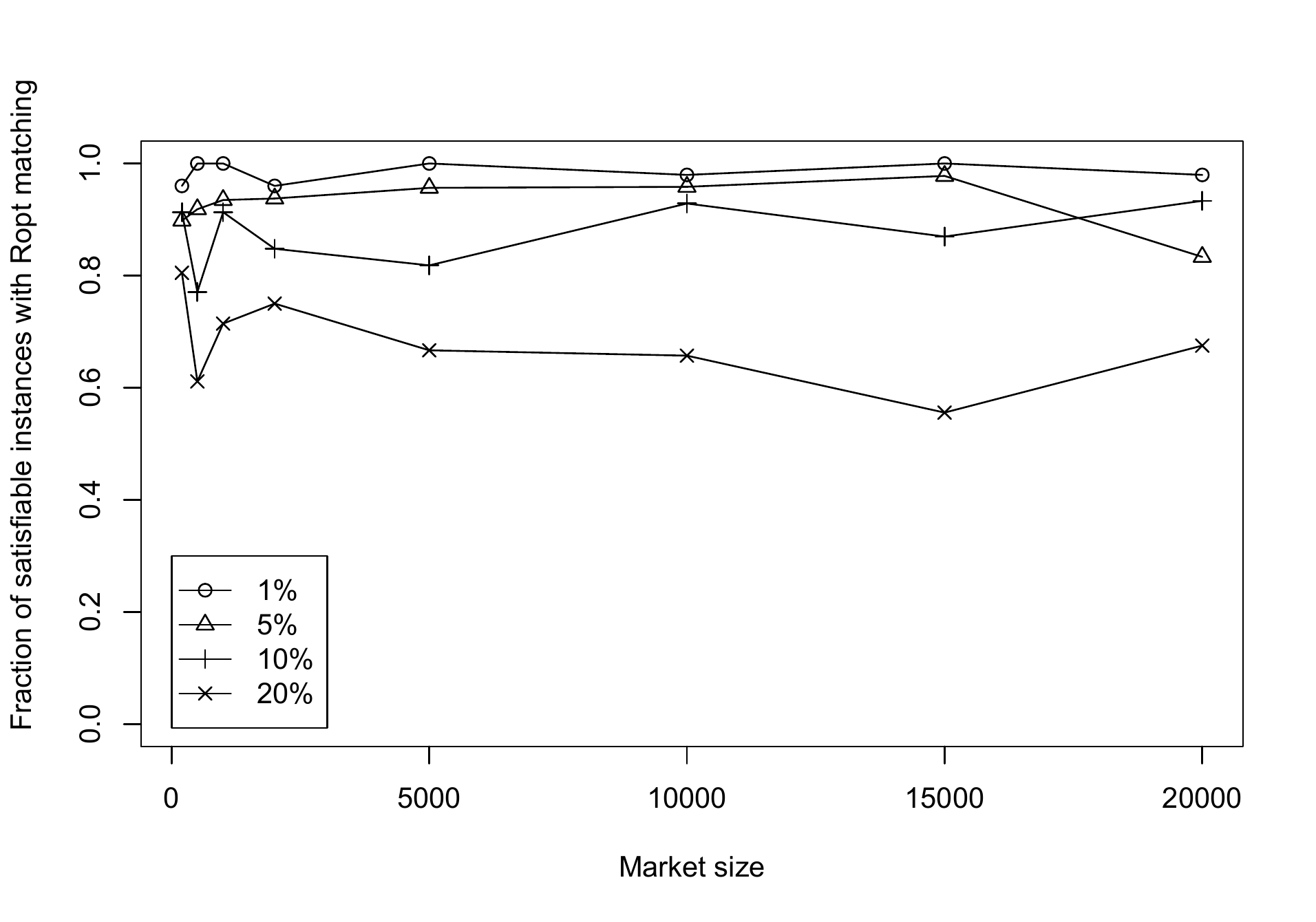}
\caption{Fraction of satisfiable instances with an $\Ropt$ matching, varying size of the market and percentage of couples in the market.}
\label{fig:immorlica_pct_rposm}
\end{figure}

Finally, looked at those instances with more than one stable matching
to see how many, on average, $\RPopt$ matchings they contain. We found
that in some cases $\RPopt$ matchings are quite common. When there are
two stable matchings for a given problem instance, there were an
average of 1.36 $\RPopt$ matchings per problem. Instances with 3
stable matchings always had 3 $\RPopt$ matchings per problem.
Instances with 4 stable matchings had an average of 1.69 $\RPopt$
matchings per problem, and instances with 8 stable matchings had an
average of 4 $\RPopt$ per problem.

\subsection{Structure of \smpc Problem Instances}
In this section we examine some finer grained structure among the set
of stable matchings. We examined the stable matchings and a basis for
the set of Pareto improving moves. A Pareto improving move is the set
of resident transfers needed to transform matching $\mu_2$ to matching
$\mu_1$ when $\mu_1\mgt \mu_2$ (i.e., the transfers improve the
match). Improving moves that are the result of combining other
improving are not counted as part of the basis. 

We show this structure as a directed graph, where nodes are stable
matchings for a given instance, and an edge goes from node $a$ to $b$
if there is a Pareto improving move transforming $a$ to $b$. (Note
this implies that $b\mgt a$. Note this graph is not guaranteed to be
connected, and in practice, many of these graphs have multiple
components. By definition, each component will contain at least one
$\RPopt$ matching. While it appears to be exceedingly rare for any
component to contain more than one component (at least in the problems
we generated), we did find one component with two $\RPopt$ matchings.

For an illustration of what these components and the possible Pareto
improving moves may look like, see Figure \ref{fig:immorlica_8_fig}.
For the problem instance chosen, there were 8 stable matchings, 2 of
which were $\RPopt$ matchings. Each node in the graph contains a tuple
of the singles' and couples' average rank; each edge on the graph is
the maximum improvement for any single and any couple from the
previous match to the new Pareto improved match.

Also note that in this problem instance, on average, one of these
$\RPopt$ matchings is better for couples (by 0.03 positions on
average) and one of these $\RPopt$ matchings is better for singles (by
0.0002 positions on average). Interestingly, there are very few unique
Pareto optimal moves in this graph; there are only two. This tends to
hold in general; when there are two stable matchings, there are an
average of 0.64 unique Pareto moves per problem. With 4 stable
matchings, there are an average of 1.48 Pareto moves.

Note that, for the single that has the greatest incentive to go from
one matching to another, they have quite a high incentive---almost
their entire ROL. (Remember, singles give a ROL of size 5.) Note that
Figure \ref{fig:immorlica_8_fig} shows a case where only single
residents improve; this is not true in the general case. Couples,
likewise, can greatly benefit from moving from a non-Pareto optimal
matching to a $\RPopt$ matching. Additionally, some singles and
couples will strongly prefer one $\RPopt$ matching over another; in
this example,
%switching from the RPOSM on the left to the one on
the maximum incentive for any single to switch from the $\RPopt$
matching on the left to the one on the right is moving up their ROL 3
positions. In this example, there is no incentive for couples to
switch from the $\RPopt$ matching on the left to the one on the right.
For switching from the $\RPopt$ matching on the right to the one on
the left, the maximum incentive for any single to move is one position
in their ROL, and 3 positions for any couple. In this instance, both
of the DA algorithms KPR and RP99 found the same $\RPopt$ matching,
denoted by the square node in the graph. While KPR and RP99 are not
guaranteed to find an $\RPopt$ matching, they did in this instance.

\begin{figure}
\center
%\textbf{J: Andrew, the finished Immorlica 8-figure goes here}
\includegraphics[width=1.0\columnwidth]{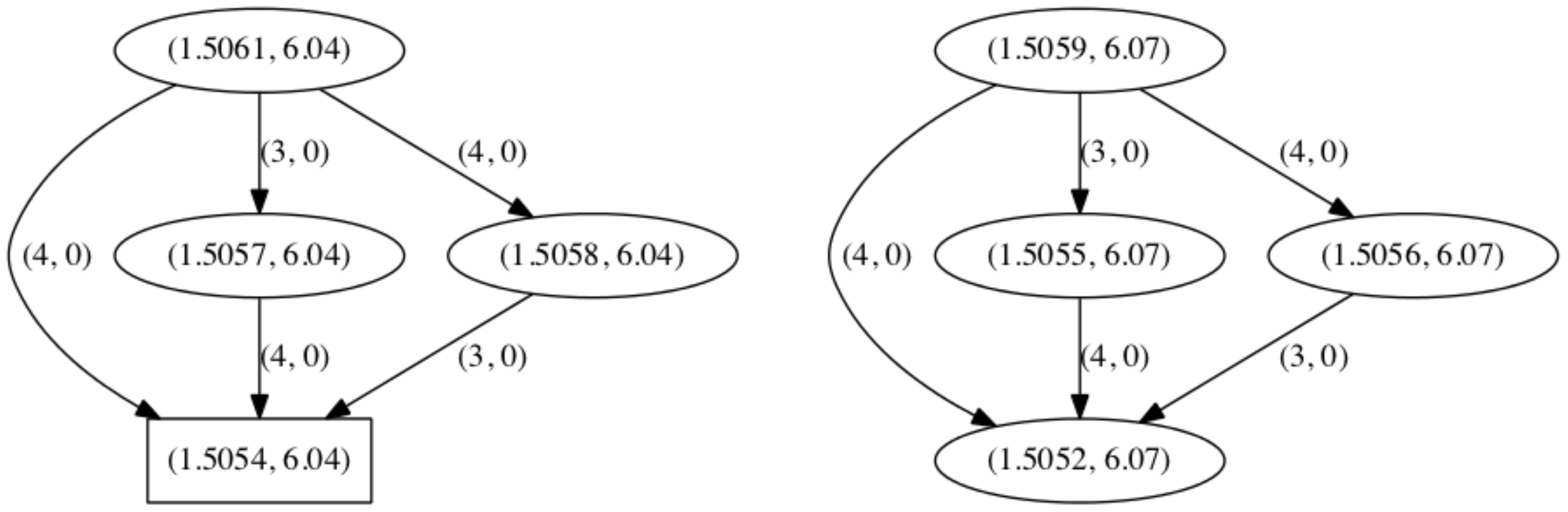}
\caption{Pareto Improvement graph for one problem instance, with 8
  stable matchings, and 2 $\RPopt$ matchings. Square denotes the
  matching KPR and RP99 found.}
\label{fig:immorlica_8_fig}
\end{figure}

While in this example, there is little incentive to switch from one
$\RPopt$ matching to another,
%many instances 
there were instances in the data where at least one single could move
up their \emph{entire} preference list by switching from one $\RPopt$
matching to another, and at least one couple could move up their
\emph{entire} preference list by switching from one $\RPopt$ matching
to another.

\subsection{The Performance of DA Algorithms for \smpc}
We are also interested in how the standard DA-style algorithms perform
with respect to finding $\RPopt$ matchings. As Roth and Peranson
mention that resident optimality is an important design aspect for
stable matching algorithms, we investigate how frequently the DA
algorithms find a $\RPopt$ matching. We do this by using
Algo~\ref{algo:all_matches} to enumerate all possible stable matchings
for each problem instance. We then identify all $\RPopt$ matchings,
and then see if the DA algorithms find a $\RPopt$ matching for that
instance. Table \ref{tab:da_perf} shows these results. We did not find
much of a relationship between the size of the problem and how
frequently the DA algorithms found an $\RPopt$ matching, so we combine
all problem sizes for this analysis. Note that, when analyzing the \%
of times the DA algorithm found an $\RPopt$ matching, we only look at
problems where (a) more than one stable matching exists (if only one
stable matching exists, that solution is automatically an $\RPopt$
matching) and (b) the DA algorithm found a matching (the DA algorithms
have different failure rates and we wanted to directly compare only
when they succeed.) Table \ref{tab:da_perf} also shows the general
failure rate for the DA algorithms (i.e., when at least one stable
matching exists, but the algorithm fails to find one).

\begin{table}
\label{tab:da_perf}
\begin{tabular}{|r||r|r|r|r|}
\hline
 & \multicolumn{2}{|c|}{\% $\RPopt$ found} & \multicolumn{2}{|c|}{\% Total Timeouts}\\
\cline{2-5}
\% Couples & KPR &RP99  & KPR & RP99 \\
\hline
\hline
1 & 100.00 & 100.00 & 0.00 & 0.00 \\
\hline
5 & 94.54 & 96.16 & 0.25 & 0.75 \\ 
\hline
10 & 88.66 & 89.58 & 0.25 & 6.25 \\ 
\hline
20 & 91.42 & 90.24 & 5.57 & 36.26 \\ 
\hline
\end{tabular}
\caption{DA algorithms performance versus percentage of couples in a match. For \% $\RPopt$ found, only problems with more than one stable matching that did not timeout were considered.  For \% Total timeouts, all problems with at least one stable matching were considered.}
\end{table}

Critically, note that even though the DA algorithms were designed to
be as resident favoring as possible, they do not find an $\RPopt$
matching in a large percentage of problems with multiple stable
matchings. When the percentage of couples is low, the DA algorithms'
performance is either perfect, or fairly good (with about a 5\%
failure rate). For problems with roughly the same percentage of
couples as the NRMP (roughly 6\%) the failure rate of KPR and RP99 to
find a $\RPopt$ matching is fairly low, at 5\% (though KPR's failure
rate is slightly higher.) However, with more couples in the match, the
failure rate jumps to roughly 10\%. Again, in all of these settings,
SAT-E with the additional constraints was able to enumerate \emph{all}
stable matchings, including all $\RPopt$ matchings.

It appears as though RP99 finds an $\RPopt$ matching
with a slightly higher rate than KPR when it finishes. These two
algorithms deal with couples fairly differently (as RP99 has couples
propose individually, and couples propose in-batch in KPR), which
could account for the difference in performance. However, KPR
outperforms RP99 by a large margin with respect to the percentage of
problem instances solved. KPR's failure rate is less than 0.25\% for
all problems with 10\% or less couples, and only 5.75\% when the match
has 20\% couples. RP99's failure rate grows quickly, reaching 36.25\%
by the time couples consist of 20\% of the market. Therefore, it seems
as though there is a slight benefit to using RP99 to find $\RPopt$
matchings, but its large failure rate may negate this benefit.

%%TODO 
%\textbf{j: also talk about the SMs vs \# unique pareto improvements here? graph here?}

%\begin{table}
%\center
%\begin{tabular}{|r|r||r|r|r|r|r|r|r|r|r|r|r|}
% & & Avg. & Max & Avg. & Max & Best & \multicolumn{3}{c}{RP99} & \multicolumn{3}{c}{KPR}\\ 
%\% Couples & Market Size & \# SMs & \# SMs & \# $\RPOpt$ & \# $\RPOpt$ & Avg Ranking & \% Pareto Opt & \% Best & Avg Ranking & \% Pareto Opt & \% Best & Avg Ranking\\ 
%\hline
%1\% & 200 & 1.44 & 4 & & \\
%\hline
%1\% & 500 & 1.33 & 8 & & \\ 
%\hline
%1\% & 1000 & 1.16 & 2 & & \\ 
%\hline
%1\% & 2000 & 1.52 & 4 & & \\ 
%\hline
%1\% & 5000 & 1.36 & 4 & & \\ 
%\hline
%1\% & 10000 & 1.31 & 4 & & \\ 
%\hline
%1\% & 15000 & 1.32 & 4 & & \\ 
%\hline
%1\% & 20000 & 1.41 & 8 & & \\ 
%\hline
%\hline
%5\% & 200 & 1.61 & 8 & & \\ 
%\hline
%5\% & 500 & 1.43 & 4 & & \\ 
%\hline
%5\% & 1000 & 1.46 & 4 & & \\
%\hline
%5\% & 2000 & 1.23 & 4 & & \\ 
%\hline
%5\% & 5000 & 1.43 & 4 & & \\ 
%\hline
%5\% & 10000 & 1.21 & 2 & & \\ 
%\hline
%5\% & 15000 & 1.22 & 2 & & \\ 
%\hline
%5\% & 20000 & 1.52 & 4 & & \\
%\hline
%\end{tabular}
%\end{table}

%Psych:
%uniqueness?
%pareto optimality?

%%%%%%%%%%%%%%%%%%%%%%%%%%%%%%%%%%%%%%%%%%%%
%\input{dpb_ec_conclusion.tex}

\section{Discussion and Future Work}
In this paper, we have provided new theoretical results on
strategy-proofness for \smpc, and new extensions to an existing SAT
encoding for \smpc that allows us to both enumerate every stable
matching for a given problem instance, and guarantee that we've found
a $\RPopt$ matching if we don't have the resources to enumerate all
instances. We then used these techniques to explore properties of \smpc
instances.

%Interestingly, we found that 
We used these tools to empirically investigate previous theoretical
results. Interesting, and perhaps unexpectedly, the Immorlica and
Mahdian results (given a constant list size, the probability of a
unique stable matching goes to one as the market size increases) did
not seem to hold. However, we empirically found evidence for a related
phenomenon for \smpc, leading us to the following conjecture: for
problem instances drawn from the same distribution with a fixed ROL
size of $k$, the probability of a resident optimal stable matching
goes to one as the size of the market goes to infinity.

We also empirically evaluated RP99 and KPR, two DA-style algorithms
for \smpc that do not have any theoretical guarantees. While they
frequently perform well on the problem instances we drew, we found
many problem instances where they either did not find a stable
matching at all even though one existed, or if they did find one, they
did not find a $\RPopt$ matching. In practice, it would be a simple to
use our algorithm~\ref{algo:rpopt_match} to improve the matching found
by these algorithms.

By enumerating all possible stable matchings (and thus all resident
Pareto optimal matchings for
a given matching), we can now pick stable matchings based on certain
properties. Enumerating the set of all stable matchings allows us to
pick the matching with the best average rank for all singles and
couples (a sort of resident social-welfare optimal matching), or the
stable matching that places the worst-off individual as high in their
preference rankings as possible. While we did not empirically evaluate
this in this paper, being able to enumerate all stable matchings
allows for a rich investigation into this problem.

Our overall conclusion is that encoding to SAT is a powerful tool for
further empirical analysis of stable matching problems. The research
question that arises is how best to use this tool and how this tool
can be adapted to obtain insight into other important open questions
about NP-complete stable matching problems. The current paper has
shown that this tool has already exposed some interesting theoretical
questions. For example, what can be proved about the frequency in
which resident-optimal matchings appear in randomly drawn \smpc
instances as the market size increases; is there further structure to
the set of $\RPopt$ solutions; can the existence of a resident-optimal
matching or conditions on the set of $\RPopt$ solutions be used to
prove further strategy-proof results for \smpc.

In addition to theoretical results, there is considerable potential
for practical computation of stable matchings. For example, using both
DA algorithms, and algorithm~\ref{algo:rpopt_match} to improve the
resulting matching might have some very useful applications. It might
also be the case that solving SAT-E with additional constraints might
be applicable to some currently unsolved practical matching problems. 

\vspace{1mm}
\noindent \textbf{Acknowledgements.}
This work is supported by the Natural Sciences and Engineering
Research Council of Canada.  
Perrault and Drummond are additionally supported by an Ontario Graduate Scholarship.

\bibliographystyle{named}
\bibliography{sat_smpc_optimality}

\newpage
\appendix
\section{SAT Encoding for the Stable Matching Problem with Couples}
\label{sec:sat-encoding}
In this appendix we provide an encoding of \smpc into SAT. \smpc was
shown to be an NP-Complete problem by Ronn \cite{ronn1990np}. Hence, it can be
encoded and solved as a satisfiability problem (SAT). This is an attractive
approach for solving \smpc due to the recent tremendous advances in SAT solvers.

The encoding of \smpc we give here forms the basis of Drummond et al.\ \shortcite{dpb_review}. Nevertheless, since we extend this
SAT encoding in order to find Pareto optimal matchings and to find all stable
matchings, we provide its details in this appendix (which will only be available
on-line).

First we assume that the doctor (singles and couples) \rols have been
preprocessed so as to remove from them any program that does not find that
doctor acceptable. For $d\in \S$ we remove $p$ from $\rol{d}$ if
$d\notin \rol{p}$. For couple $(d_1,d_2)\in \C$ we remove $(p_1, p_2)$ from
$\rol{(d_1,d_2)}$ if $d_1 \notin \rol{p_1}$ or $d_2 \notin \rol{p_2}$. This
ensures that in any matching found by solving the SAT encoding no doctor $d$ can be
matched into a program that does not rank $d$. 

A SAT encoding consists of a set of Boolean variables and a collection of
clauses over these variables. We first describe the set of variables and their
intended meaning, and then we present the clauses which enforce the conditions
required for a stable matching.\\[10pt]

\subsection{Variables}
We utilize three sets of Boolean variables.

\noindent
\textbf{1. Doctor Matching Variables:}
\begin{align*}
\{\matchDvar{d}{p}\,|\, d\in D \land p\in \rol{d}\}
\end{align*}
$\matchDvar{d}{p}$ is true iff $d$ is matched into program $p$. Note
that $\matchDvar{d}{\nomatch}$ is true if $d$ is unmatched.

\noindent
\textbf{2. Couple Matching Variables:}
\begin{align*}
\{\matchCvar{c}{i}\,|\, c\in C \land
(0 \leq i < |\rol{c}|)\}
\end{align*}
$\matchCvar{c}{i}$ is true iff couple $c$ is matched into a program
pair $(p,p')$ that it ranks between $0$ and $i$, i.e., $0 \leq
\rank{c}((p,p')) \leq i$. (Lower ranks are more preferred).

\noindent
\textbf{3. Program Matching Variables:}
\begin{align*}
\{\matchPvar{p}{i}{s}\, |\, p\in \P &\land (0 \leq i \leq |\rol{p}|-2) \\ 
&\land (0 \leq s \leq \min(i+1, \quota{p}+1))\}
\end{align*}
$\matchPvar{p}{i}{s}$ is true iff $s$ of the doctors in $\rol{p}[0]$
to $\rol{p}[i]$ have been matched into $p$. Note that $i$ ranges up to
$|\rol{p}|-2$ which is the index the last non-$\nomatch$ doctors on
$\rol{p}$ ($\rol{p}$ is terminated by $\nomatch$).

The program matching variables are the main innovation of our encoding. These
variables are especially designed to allow us express more efficiently the
constraints a stable matching must satisfy.

\subsection{Clauses}
Now we give the \textbf{clauses} of the encoding. Rather than give the
more lengthy clauses directly, we often give higher level constraints
whose CNF encoding is straightforward.

\paragraph{Unique Match} A doctor must be matched into exactly one
program (possibly the \nomatch program). For all $d\in D$ 
\begin{description}
  \item[\textbf{1a}] $\mbox{\textbf{at-most-one}}(\{\matchDvar{d}{p} | p\in
  \ranked{d}\})$
  \item[\textbf{1b}] $\bigvee_{p \in \ranked{d}} \matchDvar{d}{p}$
\end{description}

The \textbf{at-most-one} constraint simply states that at most one of a set of
Boolean variables is allowed to be true. It can be converted to CNF in a number
of different ways. \textbf{1b} ensures that some match (possibly to the \nomatch
program) is made.

\paragraph{Couple Match} The $\matchCvar{c}{*}$ variables must have
their intended meaning. For all couples $c \in \C$, for all $k$ such
that $1\leq k \leq |\rol{c}|$, letting $c=(d_1, d_2)$ and $(p_1[i],
p_2[i]) = \rol{c}[i]$, 
\begin{description}
  \item[\textbf{2a}]
  $ \matchCvar{c}{0} \equiv \matchDvar{d_1}{p_1[0]} \land
  \matchDvar{d_2}{p_2[0]}$
  \item[\textbf{2b}]
  $ \matchCvar{c}{k} \equiv (\matchDvar{d_1}{p_1[k]} \land
  \matchDvar{d_2}{p_2[k]}) \lor \matchCvar{c}{k-1}$
  \item[\textbf{2c}]
  $ \matchCvar{c}{|\rol{c}|}$
\end{description}
The final condition ensures that $c$ is matched to some program pair
on its \rol{} (possibly $\nomatch$), and the 
\textbf{at-most-one} constraint for $d_1$ and $d_2$ 
ensures that $c$ is uniquely matched.

\paragraph{Program Match} The $\matchPvar{p}{*}{*}$ variables must have
their intended meaning. For all programs $p\in \P$, for all $i$ such
that $1 \leq i \leq |\rol{p}|-2$, and for all $s$ such that $0 \leq s
\leq \min(i+1, \quota{p}+1)$, letting $d_i = \rol{p}[i]$ (the $i$-th
doctor on $p$'s \rol{}),
\begin{description}
  \item[\textbf{3a}]
  $\bigl(\matchPvar{p}{0}{0} \equiv  \lnot
  \matchDvar{d_0}{p}\bigr) \land \bigl(\matchPvar{p}{0}{1}
  \equiv  \matchDvar{d_0}{p}\bigr)$
  \item[\textbf{3b}]
    $\matchPvar{p}{i}{s} \equiv
    (\matchPvar{p}{i{-}1}{s}{\land}\lnot \matchDvar{d_i}{p})
    \lor (\matchPvar{p}{i{-}1}{s{-}1}{\land}\matchDvar{d_i}{p})$
  \item[\textbf{3c}]
  $\lnot \matchPvar{p}{i}{\quota{p}+1}$
\end{description}
The last condition, captured by a set of unit clauses, ensures that 
$p$'s quota is not exceeded at any stage. Falsifying these variables
along with the other clauses ensures that no more matches can be made
into $p$ once $p$ hits its quota. Note that $i$ only indexes up to the
last non-\nomatch doctor on $\rol{p}$ since $\nomatch$ does not use up
any program capacity.

% Now present the second set of clauses that are used to encode the
% stability conditions. First, it should be noted that conditions
% \textbf{2} and \textbf{3} given above are nested conditions. For
% example, in \textbf{2} for $\match(d) = \rol{d}[i]$ to be stable we
% must ensure that none of $\rol{d}[0]$ through $\rol{d}[i{-}1]$ will
% accept $d$, and then again when $\match(d) = \rol{d}[i+1]$ we must
% ensure all of the same non-acceptance conditions as well as ensuring
% that $\rol{d}[i]$ won't accept $d$. The stability condition \textbf{3}
% for couples have a similar structure. We exploit this structure to
% achieve a more compact SAT encoding.

\paragraph{Stability for Singles} For each single doctor, $d\in S$ and
for each $p\in \rol{d}$ 
\begin{description}
  \item[\textbf{4}] $
  \bigl(\bigvee_{p' \pgeq{d} p} \matchDvar{d}{p'}\bigr) \lor
  \matchPvar{p}{\rank{p}(d)-1}{\quota{p}}$
\end{description}

This clause says that if $d$ has not been matched into a program
preferred to or equal to $p$, then it must be the case that $p$ will
not accept $d$. Note that $\matchPvar{p}{\rank{p}(d)-1}{\quota{p}}$
means that $p$ has been filled to capacity with doctors coming before
$d$ on its \rol.

\paragraph{Stability for Couples (A)} For each couple $c = (d_1,d_2) \in
\C$ and for each $(p_1,p_2) \in \rol{c}$ \textbf{with}
$\mathbf{p_1\neq p_2}$
\begin{small}
\begin{description}
  \item[\textbf{5a1}] $  
    \matchDvar{d_1}{p_1} \land \lnot \matchCvar{c}{\rank{c}((p_1,p_2))} \\
    \implies \matchPvar{p_2}{\rank{p_2}(d_2)-1}{\quota{p_2}}$
  \item[\textbf{5a2}] $
    \matchDvar{d_2}{p_2} \land \lnot \matchCvar{c}{\rank{c}((p_1,p_2))} \\
    \implies  \matchPvar{p_1}{\rank{p_1}(d_1)-1}{\quota{p_1}}$
  \item[\textbf{5b}] $ 
  \begin{array}[t]{l}
    \lnot \matchDvar{d_1}{p_1} \land \lnot \matchDvar{d_2}{p_2} \land
    \lnot \matchCvar{c}{\rank{c}((p_1,p_2))}
    \mbox{} \, \implies \\
      \matchPvar{p_1}{\rank{p_1}(d_1)-1}{\quota{p_1}} 
      \lor \matchPvar{p_2}{\rank{p_2}(d_2)-1}{\quota{p_2}}
  \end{array}$
\end{description}
\end{small}

Clause \textbf{5a1} says that when $d_1$ is already matched to $p_1$
but $c$ has not been matched into $(p_1,p_2)$ or into a more preferred
program pair, then it must be the case that $p_2$ will not accept
$d_2$. \textbf{5a2} is analogous.

Clause \textbf{5b} says that if neither $d_1$ nor $d_2$ is matched
into $p_1$ or $p_2$ and $c$ has not been matched into $(p_1,p_2)$ or
into a more preferred program pair, then either $p_1$ will not accept
$d_1$ or $p_2$ will not accept $d_2$. 

\paragraph{Stability for Couples (B)} For each couple $c=(d_1,d_2) \in
\C$ and for each $(p,p)\in \rol{c}$ we have one of constraint
\textbf{6a1} or \textbf{6b1}. \textbf{6a1} is needed when $d_1
\pgt{p} d_2$, while \textbf{6b1} is needed when $d_2 \pgt{p} d_1$.
\begin{small}
\begin{description}
  \item[\textbf{6a1}] $  
    \matchDvar{d_1}{p} \land \lnot \matchCvar{c}{\rank{c}((p,p))}
    \implies \\ \matchPvar{p}{\rank{p}(d_2)-1}{\quota{p}}$
  \item[\textbf{6b1}] $
    \matchDvar{d_1}{p} \land \lnot \matchCvar{c}{\rank{c}((p,p))} 
    \implies \\ \matchPvar{p}{\rank{p}(d_1)-1}{\quota{p}-1}$
  \item[\textbf{6c}] $
  \begin{array}[t]{l}
    \lnot \matchDvar{d_1}{p} \land \lnot \matchDvar{d_2}{p} \land 
    \lnot \matchCvar{c}{\rank{c}((p,p))}
    \mbox{}\, \implies \\
      \begin{array}[t]{l}
      \matchPvar{p}{\rank{p}(d_1){-}1}{\quota{p}}
      \lor \matchPvar{p}{\rank{p}(d_1){-}1}{\quota{p}{-}1} \\
      \mbox{} \lor \matchPvar{p}{\rank{p}(d_2){-}1}{\quota{p}} \lor
      \matchPvar{p}{\rank{p}(d_2){-}1}{\quota{p}{-}1}
      \end{array}
  \end{array}$
\end{description}
\end{small}
Clauses \textbf{6a1} or \textbf{6b1} say that if $d_1$ is already in
$p$ and $c$ is not matched to $(p,p)$ or into a more preferred program
pair, then $p$ will not accept $d_2$. \textbf{6b1} differs because
when $d_2\pgt{p} d_1$ and $d_1$ is already in $p$, $p$ will definitely
accept $d_2$. In this case, however, the couple is not accepted into
$(p,p)$ if accepting $d_2$ causes $d_1$ to be bumped. That is, when
$\matchPvar{p}{\rank{p}(d_1)-1}{\quota{p}-1}$ is true (adding $d_2$
will cause $\matchPvar{p}{\rank{p}(d_1)-1}{\quota{p}}$ to become
true).

There are also analogous clauses \textbf{6a2} and \textbf{6b2} (one of
which is used) to deal with the case when $d_2$ is already in $p$ and
we need to ensure that $p$ won't accept $d_1$. Clause \textbf{6c}
handles the case when neither member of the couple is currently
matched into $p$.

Let $\prob = \langle \D, \C, \P, \pgeq{} \rangle$ be a matching
problem. We say that a matching $\match$ for $\prob$ and a truth
assignment $\pi$ for SAT-E of $\prob$ are
\textbf{corresponding} when $\pi\models \matchDvar{d}{p}$ iff
$\match(d) = p$.

We provide a full proof of the following theorem in Drummond et al.\ \shortcite{dpb_review}.

\begin{thm}
  If $\match$ and $\pi$ are corresponding, then $\match$ is
  stable if and only if $\pi$ is satisfying.
\end{thm}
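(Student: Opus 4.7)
The plan is to prove both directions separately and lean heavily on the intended semantics of the three groups of variables. First I would establish the foundational claim that, given any truth assignment $\pi$ satisfying clause groups 1, 2, and parts 3a--3b of the encoding, the values $\pi$ assigns to $\matchCvar{c}{i}$ and $\matchPvar{p}{i}{s}$ are forced to match their intended meaning relative to the doctor-matching variables $\matchDvar{d}{p}$. This is a straightforward induction on $i$ using the biconditionals in clauses 2a, 2b and 3a, 3b. Once established, the remaining ``content'' clauses (3c, 4, 5a1, 5a2, 5b, 6a1/6b1, 6a2/6b2, 6c) can be read directly as asserting individual rationality and the absence of blocking pairs.

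For the forward direction, assume $\match$ is stable and $\pi$ corresponds to $\match$. Clauses 1a, 1b hold because $\match$ is a function from $\D$ to $\Pplus$. Clauses 2a--2c, 3a--3b hold by the intended meaning of the derived variables. Clause 3c holds because $|\matchinv(p)| \leq \quota{p}$ by individual rationality. For clause 4, I would argue the contrapositive: if $\match$ matches $d$ strictly below $p$ and $p$ is not filled with $\quota{p}$ doctors it prefers to $d$, then $\willaccept(p,\{d\},\match)$ holds and $(d,p)$ is a blocking pair, contradicting stability. The couple stability clauses are handled analogously: the antecedent of each clause captures exactly the scenario in which a blocking pair of the corresponding type would arise, so stability of $\match$ forces the consequent.

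For the backward direction, assume $\pi$ satisfies SAT-E and let $\match$ be the corresponding matching. Clauses 1a, 1b yield a well-defined function $\match$; together with the ROL preprocessing step (which removed any $p$ from $\rol{d}$ when $d\notin\rol{p}$), individual rationality for doctors is immediate, and clause 3c together with the intended semantics of $\matchPvar{p}{i}{s}$ gives the capacity and acceptability conditions for programs. To rule out blocking pairs I would do a case split on the three blocking-pair types from the definition and, in each case, read off the appropriate stability clause: if a hypothetical blocking pair existed, exactly one of clauses 4, 5a1, 5a2, 5b, 6a1/6b1, 6c (choosing the right case of 6 depending on $\pgt{p}$ between $d_1$ and $d_2$) would be falsified, a contradiction.

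The hard part will be the couple cases, particularly the subtle split between clauses 6a1 and 6b1 when a couple $(d_1,d_2)$ targets $(p,p)$ and one member is already in $p$. The difference in which $\matchPvar{p}{\cdot}{\cdot}$ term appears in the consequent encodes whether $p$ would bump the already-matched partner to accept the other, and this must line up precisely with the definition of $\willaccept(p,\{d_1,d_2\},\match)$ and the third clause of the blocking-pair definition. Patiently aligning each Boolean case of clauses 6a1/6b1/6a2/6b2/6c with the correct combinatorial case of $\choice{p}(\cdot)$ over a set that may include $d_1$, $d_2$, or neither already matched into $p$ is the bulk of the argument; everything else reduces to the simpler single-doctor analysis or to the inductive semantics of the derived variables.
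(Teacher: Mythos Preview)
The paper does not actually prove this theorem here: the sentence immediately preceding the theorem reads ``We provide a full proof of the following theorem in Drummond et al.,'' and the argument is deferred entirely to that companion paper. There is therefore no in-paper proof to compare your proposal against. Your outline is the natural one and is almost certainly what the deferred proof does: fix the semantics of the derived variables by induction over the biconditionals in groups 2 and 3, and then check that each remaining clause family (3c, 4, 5a--b, 6a--c) lines up with one case of individual rationality or one blocking-pair type.

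One genuine wrinkle you gloss over concerns the forward direction. As defined in this appendix, ``corresponding'' constrains only the $\matchDvar{d}{p}$ variables; it says nothing about $\matchCvar{c}{i}$ or $\matchPvar{p}{i}{s}$. So when you write ``Clauses 2a--2c, 3a--3b hold by the intended meaning of the derived variables,'' that step is not justified: an arbitrary $\pi$ agreeing with a stable $\match$ on the $\matchDvar{d}{p}$ variables need not satisfy those biconditionals at all, and hence need not be a satisfying assignment. Read literally, the forward implication of the theorem is false. What your argument actually establishes (and what the subsequent 1--1 correspondence claim needs) is that for every stable $\match$ there \emph{exists} a corresponding satisfying $\pi$, namely the one obtained by propagating the intended values through 2a--2b and 3a--3b, and conversely every satisfying $\pi$ yields a stable $\match$. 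You should either state the forward direction in that existential form or explicitly extend ``corresponding'' to pin down the derived variables before running your argument; otherwise the proof has a gap at exactly the point you treat as routine.
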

This theorem shows that the satisfying assignments of SAT-E
are in a 1-1 relationship with the stable models.

%\section{Introduction}

% Bibliography
\end{document}